\title{\LARGE \bf
Stability and Transparency Analysis of a Bilateral Teleoperation in Presence of Data Loss 
} 
\author{ A. Bakhshi, H.A. Talebi, A.A. Suratgar, and M. Abdeetedal 
\thanks{ Authors are with the Department of Electrical Engineering, AmirKabir University of  Technology, Tehran, Iran. Email:\{alibakhshi21,alit,a-suratgar,metedal\}@aut.ac.ir.
        }%
}
\newtheorem{asm}{Assumption}[section]
\newtheorem{prop}{Property}
\newtheorem{theo}{Theorem}[section]
\newtheorem{lem}{Lemma}
\begin{document}

\maketitle
\thispagestyle{empty}
\pagestyle{empty}

\begin{abstract}

This paper presents a novel approach for stability and transparency analysis for bilateral teleoperation in the presence of data loss in communication media. A new model for data loss is proposed based on a set of periodic continuous pulses and its finite series representation. The passivity  of the overall  system is shown using wave variable approach including the newly defined model for data loss. Simulation results are presented to show the effectiveness of the proposed approach.
\end{abstract}

\section{Introduction}

In a teleoperation system, the operator can perform a task in a remote environment via slave robot that commanded by the master robot. Tele-manipulation  tasks are done by sending position, velocity, and/or force information remotely. The main applications are outer space explorations \cite{1}, handling of toxic materials \cite{2} , and minimally invasive surgery \cite{w}. In bilateral teleoperation, a communication network structured the connection between  master and the slave. Time delay and data loss occur in communication channel between the master and slave sites. Wave variable and smith predictor methods are among the common control schemes for compensation of time delay and data loss. 

Anderson and Spong used scattering transformation, network theory and the passivity to provide the stability of a teleoperation system in presence of constant time delay \cite{3}. Niemeyer and Slotine introduced wave variables \cite{4}. In this wave-based framework, the scattered wave variables are communicated via the delayed transmission lines instead of the typical power-conjugated variables such as force and velocity. The use of the wave scattering approach solved the destabilizing effects incurred in the transmission lines by passifying the communication channels independent of the amount of the delay.However, this approach is not readily applicable to time-varying delays. Moreover there is no guarantee in performance in the presence of even constant communication delay.

Since position information does not pass through the communication channels, system faces position drift and raising tracking error. furthermore the time varying delay causes distorted velocity signal and results in tracking error. In order to compensate the effects of time varying delay, wave variables are sent along their integrations \cite{5} , \cite{6} . In this method the main signal and its integral are received by the slave and the position is then calculable. moreover in \cite{6},  pre mentioned approach is extended by sending power signal of the wave variable along side other signals. In \cite{7}, instability is avoided by defining new gains in communication channels. These gains result in stability of the system, although cause poor performance. In \cite{8}, delayed position signals are also used but this strategy causes tracking error as well. Similar approach has been used in \cite{9} to converge the velocity error to zero. How ever, it failed to to provide zero position tracking error.

In \cite{10}, a proper state feedback is designed and  passive input/output system is defined such that it includes the position and velocity information. This approach result in good performance in presence of time delay. To avoid the adverse effects of wave distortion due to time varying delays and data losses, a novel solution  based on the digital reconstruction of the wave variables is proposed in \cite{11}. This approach introduces the use of  buffering and interpolation scheme that preserves the passivity of the system which reduces the tracking error under time-varying delays and packet losses.
In \cite{12}, the passivity of the system has been shown in the presence of data loss by considering zero value Wave in a discrete communication. As long as data loss causes poor performance, considering data reconstruction methods is mandatory. In \cite{13}, the effects of time delay and packet loss are compensated by estimating the received wave variables. This technique is based on Smith predictor and Kalman filter. Several researches have been done on compensating the effects of time delay in communication channels though most of which result in  poor performance in the presence of data loss.

In network control system (NCS) literature, data loss often is modelled by using three main categories. In \cite{14}, data loss has been modelled as jump linear systems with Markov chains. In \cite{15}, modelling has been done using asynchronous dynamical system(ADS). In \cite{16}, random sample system has been used for modelling.

In this paper, a robust and high performance teleoperation system in the presence of data loss and different initial conditions is presented. A new model for data loss is proposed based on a set of periodic continuous pulses and its finite series representation and a state feedback control law for master and slave manipulators is designed. The presented control laws contains  both position and velocity information. The new architecture, built within the passivity framework, provides good transparency which is measured in terms of position tracking abilities of the bilateral system. This configuration provides robust performance against network effects such as packet losses and reordering which shows the ability of the architecture to teleoperate over unreliable communication networks .

The structure of this paper is as follows. In Section II, our proposed model for data loss is presented. The data loss is presented as a set of periodic continuous pulses and its finite series representation. In Section III, the structure of the teleoperation system in presence of the data loss is proposed. In Section IV, the stability of the teleoperation system is shown by defining a Lyapunov function candidate and obtain conditions to have acceptable performance. In Section V, the validation of our control approach is investigated via simulations and conclusion is presented in Section VI.
\section{Data loss modelling }
 In this paper, we propose a new continuous time model for data loss.
\begin{asm}
Data loss occurs in periodic manner which can be written as a train of pulse as shown  in Fig. \ref{trainpulse}.
\end{asm}
\begin{figure}
\centering
\includegraphics[scale=0.30]{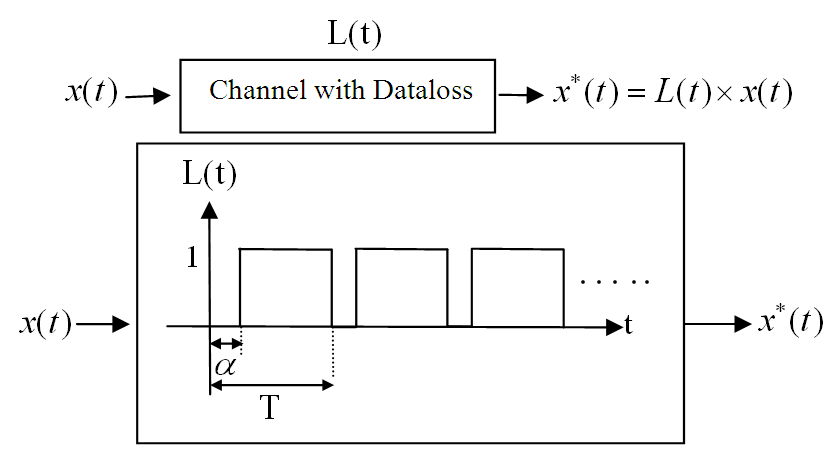}
\caption{Periodic data loss modelled as a train pulse}
\label{trainpulse}
\end{figure}

Where $(.)^*$ indicates signals which passes through communication channel in which data loss occurs, $\frac{\alpha}{T}$ is loss rate, $\alpha  \ll T$ and  $L(t)$ is a periodic function of time and.

\begin{asm}
 $L(t)$ can be written as a finite Fourier series which is continuous and periodic where :
\begin{equation}
\begin{split}
L(t) &= \sum\limits_{n = 1}^N {{a_n}\cos (n{w_0}t)}  + {b_n}\sin (n{w_0}t)\\
{a_n}& = \frac{2}{T}{\smallint _T}L(t) \times \cos (n{w_0}t)dt\\
{b_n} &= \frac{2}{T}{\smallint _T}L(t) \times \sin (n{w_0}t)dt
\end{split}
\label{1}
\end{equation}
\end{asm}
Substituting the loss rate $\frac{\alpha }{T}$ and frequency of the train pulse signal ${W_0} = \frac{{2\pi }}{T}$ into  (\ref{1}), we obtain
\begin{equation}
\begin{split}
{a_n} &= \frac{{ - 1}}{{\pi n}}\sin (\frac{{2\pi n\alpha }}{T})\\
{b_n} &= \frac{1}{{\pi n}}[\cos (\frac{{2\pi n\alpha }}{T})\, - {( - 1)^n}]\\
L(t) &= \sum\limits_{n = 1}^N {{a_n}\cos (n{w_0}t)}  + {b_n}\sin (n{w_0}).
\end{split}
\label{2}
\end{equation}
In fact, L(t) is a continues function that describes data loss as a periodic phenomenon.
\begin{figure}
\centering
\includegraphics[scale=0.30]{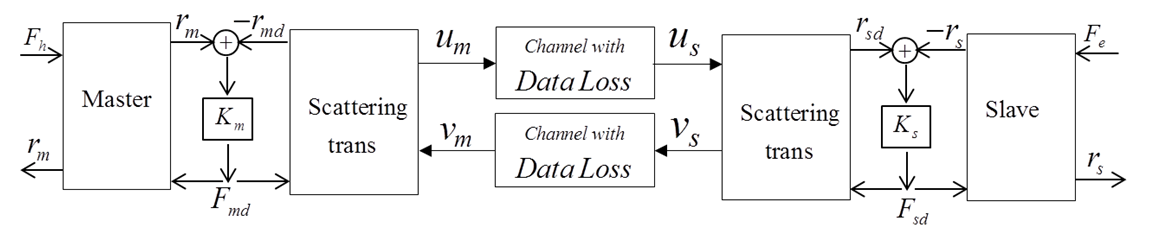}
\caption{New teleoperation system according to wave variables in presence of data loss in communication channel. }
\label{new}
\end{figure}

\section{COORDINATION ARCHITECTURE FOR BILATERAL TELEOPERATION}
The new presented model for data loss is implemented on the teleoperation structure in  \cite{10}. Master and slave dynamics are considered as follows :
\begin{equation}
\begin{array}{l}
{M_m}({q_m}){{\ddot q}_m} + {C_m}({q_m},{{\dot q}_m}){{\dot q}_m} + {g_m}({q_m}) = {J^T}{F_h} + {\tau _m}\\
{M_s}({q_s}){{\ddot q}_s} + {C_s}({q_s},{{\dot q}_s}){{\dot q}_s} + {g_s}({q_s}) = {\tau _s} - {J^T}{F_e}
\end{array}
\label{3}
\end{equation}
where ${q_m},{q_s}$ are $n \times 1$ master and slave robots joint variables vectors, ${\dot q_m},{\dot q_s}$  are $n \times 1$  joint velocity vectors, ${\tau _m},{\tau _s}$ are $n \times 1$ applied torque vectors, ${M_s},{M_m}$ are positive definite inertial $n \times n$ matrices, C is $n \times n$ Coriolis matrices and g is $n \times 1$ gravitational vectors.

It is worth mentioning that some properties of the robot structures are as fallows
\\
 
 \begin{prop}
 The inertia matrix $M$ is symmetric positive definite and there exist some positive constants ${m_1},{m_2}$ such that ${m_1}I < M < {m_2}I$ \\

 \label{p1}
 \end{prop}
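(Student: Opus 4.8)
The plan is to derive this property from the physical structure of the manipulator's kinetic energy rather than from any abstract argument. First I would recall that the total kinetic energy of an $n$-link rigid manipulator can be written as $K(q,\dot q) = \tfrac12 \dot q^{T} M(q)\dot q$, and that summing the translational and rotational contributions of the links gives the explicit form $M(q) = \sum_{i=1}^{n}\bigl(m_i J_{v_i}^{T}(q) J_{v_i}(q) + J_{\omega_i}^{T}(q) R_i(q)\,\mathcal{I}_i\, R_i^{T}(q) J_{\omega_i}(q)\bigr)$, where $J_{v_i}$ and $J_{\omega_i}$ are the linear and angular velocity Jacobians of link $i$, $R_i(q)$ its orientation, and $m_i>0$, $\mathcal{I}_i>0$ its mass and body-frame inertia tensor. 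Each summand is manifestly symmetric, so $M(q)=M^{T}(q)$; and since the kinetic energy of a genuinely moving body is strictly positive, $x^{T}M(q)x = 2K(q,x) > 0$ for every $x\neq 0$, which gives that $M(q)$ is symmetric positive definite at each fixed configuration.

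Next I would establish the uniform bounds $m_1 I < M < m_2 I$. The entries of $M(q)$ are smooth functions of $q$ (built from sines and cosines of the joint angles together with fixed link parameters), hence the maps $q\mapsto\lambda_{\min}(M(q))$ and $q\mapsto\lambda_{\max}(M(q))$ are continuous. For revolute joints the configuration variable already lives on a compact set (a product of circles), and for prismatic joints one invokes the standard operating assumption that joint travel is mechanically limited, so in either case $q$ ranges over a compact set $\mathcal{Q}$. By continuity and compactness, $\lambda_{\min}(M(\cdot))$ attains a strictly positive minimum and $\lambda_{\max}(M(\cdot))$ a finite maximum over $\mathcal{Q}$; choosing $m_1$ slightly below that minimum and $m_2$ slightly above that maximum yields the claimed strict inequalities for all admissible $q$.

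The only delicate point — and the one I expect to be the main obstacle — is the compactness hypothesis on the configuration space. If a prismatic joint were allowed unbounded extension, the effective inertia reflected to the base would grow quadratically with the extension, $\lambda_{\max}(M(q))$ would be unbounded, and the upper bound $M < m_2 I$ would fail (while $\lambda_{\min}$ could still stay bounded away from zero). So the property is really a statement about operating within a bounded workspace, which is exactly the setting relevant to a teleoperation system; once that is granted, symmetry, positive definiteness, and continuity of the eigenvalues are routine.
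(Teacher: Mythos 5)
Your argument is correct, and it is in fact more than the paper offers: the paper states this as an unproved standard ``Property'' of robot dynamics (citing nothing and giving no derivation), so there is no in-paper proof to compare against. What you give is the standard textbook derivation --- symmetry and positive definiteness of $M(q)$ from the kinetic-energy quadratic form $K = \tfrac12 \dot q^{T} M(q)\dot q$ built out of the link Jacobians and inertia tensors, followed by a continuity-plus-compactness argument for the uniform eigenvalue bounds. You also correctly flag the one genuine hypothesis hiding in the statement: the uniform upper bound $M < m_2 I$ is automatic only when the configuration space is compact (all-revolute joints, or prismatic joints with mechanically limited travel); with unbounded prismatic extension $\lambda_{\max}(M(q))$ grows without bound and the property fails. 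That caveat is exactly the condition under which this standard property is usually quoted, and it is implicitly assumed by the paper. No gaps.
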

 \begin{prop}
 Using  the  Christoffel  symbols, the  matrix $\dot M - 2C$ is skew-symmetric. \\
 
 \end{prop}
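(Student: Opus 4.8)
The plan is to work from the Christoffel-symbol definition of the Coriolis matrix and show that the $(i,j)$ entry of $\dot M - 2C$ is antisymmetric under the interchange $i \leftrightarrow j$. Recall that, written through the Christoffel symbols of the first kind, the entries of $C(q,\dot q)$ are
\begin{equation}
C_{ij}(q,\dot q) = \sum_{k=1}^{n} \frac{1}{2}\left( \frac{\partial M_{ij}}{\partial q_k} + \frac{\partial M_{ik}}{\partial q_j} - \frac{\partial M_{jk}}{\partial q_i} \right)\dot q_k .
\end{equation}
This is the standard construction for a manipulator whose equations of motion have the Lagrangian form of (\ref{3}).

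First I would differentiate $M(q)$ along a trajectory: by the chain rule, $\dot M_{ij} = \sum_{k=1}^{n} (\partial M_{ij}/\partial q_k)\,\dot q_k$. Substituting this together with the Christoffel expression into the $(i,j)$ entry of $\dot M - 2C$, the term $\sum_k (\partial M_{ij}/\partial q_k)\dot q_k$ cancels, leaving
\begin{equation}
\left( \dot M - 2C \right)_{ij} = \sum_{k=1}^{n}\left( \frac{\partial M_{jk}}{\partial q_i} - \frac{\partial M_{ik}}{\partial q_j} \right)\dot q_k .
\end{equation}
The next step is simply to observe that interchanging $i$ and $j$ in the right-hand side reverses its sign, so $(\dot M - 2C)_{ij} = -(\dot M - 2C)_{ji}$, i.e. $\dot M - 2C$ is skew-symmetric. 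Equivalently, one checks directly that $x^\top(\dot M - 2C)x = 0$ for every $x \in \mathbb{R}^n$, which is the form of the property most convenient for the Lyapunov argument in Section IV.

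There is no real obstacle here; the result is a classical consequence of the Lagrangian structure of the dynamics. The only points needing care are (i) invoking the symmetry of $M$ from Property \ref{p1}, so that $\partial M_{ij}/\partial q_k$ is symmetric in $i,j$ and the cancellation above is exact, and (ii) keeping the index bookkeeping of the partial derivatives consistent. If a coordinate-free presentation is preferred, one can instead start from the kinetic energy $\tfrac12 \dot q^\top M \dot q$, obtain $C$ from the Euler–Lagrange equations, and appeal to the equality of mixed partial derivatives of $M$ to reach the same conclusion.
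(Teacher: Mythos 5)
Your proof is correct and is exactly the classical Christoffel-symbol computation that the paper implicitly relies on: the paper states this property without proof, as a standard fact about Lagrangian manipulator dynamics, and your cancellation of the $\sum_k (\partial M_{ij}/\partial q_k)\dot q_k$ term followed by the observation that the residue $\sum_k(\partial M_{jk}/\partial q_i - \partial M_{ik}/\partial q_j)\dot q_k$ changes sign under $i\leftrightarrow j$ is the standard argument. No gaps; your remark that the quadratic-form identity $x^\top(\dot M - 2C)x = 0$ is the form actually used in the Lyapunov derivative computation of Section IV is also accurate.
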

 The master and slave control torques are given by
 \begin{equation}
 \begin{split}
{\tau _m} &=  - {M_m}\lambda {{\dot q}_m} - {C_m}\lambda {q_m} + {g_m} + {{\bar \tau }_m}\,\,\,\,\\
{\tau _s} &=  - {M_s}\lambda {{\dot q}_s} - {C_s}\lambda {q_s} + {g_s} + {{\bar \tau }_s}
\end{split}
\label{4}
 \end{equation}
Where ${\tau _m},{\tau _s}$ are master and slave actuators torques respectively. By substituting (\ref{4}) in (\ref{3}) the  master and slave dynamic can be expressed 
\begin{equation}
\begin{split}
{M_m}{{\dot r}_m} + {C_m}{r_m}& = {F_h} + {{\bar \tau }_m} = {{\tau '}_m}\\
{M_s}{{\dot r}_s} + {C_s}{r_s} &=  - {F_e} + {{\bar \tau }_s} = {{\tau '}_s}
\end{split}
\label{5}
\end{equation} 
Where ${r_m},{r_s}$ are defined as in (\ref{6})
\begin{equation}
\begin{split}
{r_m}& = {{\dot q}_m} + \lambda {q_m}\\
{r_s} &= {{\dot q}_s} + \lambda {q_s}\,\,\,.
\end{split}
\label{6}
\end{equation}
In fact, ${r_m},{r_s}$ are new outputs of the teleoperation system, where
\begin{equation}
\smallint _0^t\,{\tau '_m}^T{r_m} = \smallint _0^t{\tau '_s}^T{r_s} = 0\,\,\,.
\label{7}
\end{equation} 
Hence according to (\ref{7}) the new teleoperation system is lossless.
where $({\tau '_m},{r_m})$ and $({\tau '_s},{r_s})$ indicate new input/output of the system. Since master and salve are passive based on  new input/output definitions  hence just  the stability of channel should be proven. As in \cite{10}  the new structure of the teleoperation systems with respects to new definitions is illustrated in Fig. \ref{new}

Where variables in Fig. \ref{new} are as follows:
\begin{equation}
\begin{split}
{u_m} = \frac{1}{{\sqrt {2b} }}({F_{md}} + b{r_{md}})\,\,{V_m}& = \frac{1}{{\sqrt {2b} }}({F_{md}} - b{r_{md}})\\
{u_s} = \frac{1}{{\sqrt {2b} }}({F_{sd}} + b{r_{sd}})\,\,{V_s} &= \frac{1}{{\sqrt {2b} }}({F_{sd}} - b{r_{sd}})
\end{split}
\label{8}
\end{equation}
where ${r_{md}},{r_{sd}}$ are the signals obtained from scattering transformation. Control signals ${F_{sd}}({\bar \tau _s})$ and ${F_{md}}( - {\bar \tau _m})$ are obtained in (\ref{9}).
\begin{equation}
\begin{split}
{F_{sd}}& = {K_s}({r_{sd}} - {r_s})\,\,\,\\
{F_{md}} &= {K_m}({r_{md}} - {r_m})
\end{split}
\label{9}
\end{equation}
Where $b,{K_s}$ and ${K_m}$ are positive definite diagonal matrices which should be chosen properly. According to Fig. \ref{trainpulse} Input signals of communication channel ${u_m},{v_s}$ and output signals ${u_s},{v_m}$ in presence of data loss are as in Fig. \ref{waverel} and in \ref{10} :
\begin{figure}
\centering
\includegraphics[scale=0.30]{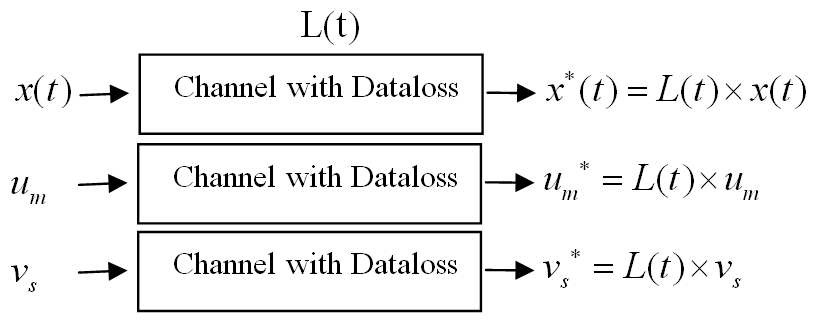}
\caption{Relations between wave variables in presence of data loss. }
\label{waverel}
\end{figure}
\begin{equation}
\begin{split}
{u_m}^ * & = {u_m}L = {u_m}\sum\limits_{n = 1}^N {{a_n}\cos (n{w_0}t)}  + {b_n}\sin (n{w_0}) = {u_s}\\
{v_s}^ *  &= {v_s}L = {v_s}\sum\limits_{n = 1}^N {{a_n}\cos (n{w_0}t)}  + {b_n}\sin (n{w_0}) = {v_m}\,\,.
\end{split}
\label{10}
\end{equation}  
In order to determine ${K_m}\,$ and ${K_s}$, using presented data loss modelling and substituting (\ref{10}) into (\ref{8}) result in (\ref{11})  .
 \begin{equation}
 F_{md}^* + br_{md}^* = {F_{sd}} + b{r_{sd}}
 \label{11}
 \end{equation}
 and by substituting (\ref{9}) in (\ref{11}) ,(\ref{12}) result in.
 \begin{equation}
 (b + {K_s}){r_{sd}} = (b - {K_m})r_{md}^* + {K_m}r_m^* + {K_s}{r_s}
 \label{12}
\end{equation}  
and similarly (\ref{13}) is obtained.
\begin{equation}
(b+{K_m}){r_{md}} = (b- {K_s})r_{sd}^* + {K_s}r_s^* + {K_m}{r_m}
\label{13}
\end{equation}
In (\ref{12}) and (\ref{13}) according to the effects of $r_{md}^*$ and $r_{sd}^*$ on $r_{sd}$ and $r_{sd}$ respectively  and the wave reflecting phenomenon, ${K_m}\,$and$,{K_s}$ are selected equal to b, which simplifies (\ref{12}),(\ref{13}) to :
\begin{equation}
\begin{split}
{K_m}& = {K_s} = b\\
2{r_{sd}}& = r_m^* + {r_s}\\
2{r_{md}} &= r_s^* + {r_m}\,\,\,.
\end{split}
\label{14}
\end{equation}

\section{STABILITY ANALYSIS}
 
In order to analyse stability of the system, first  tracking errors are defined as follows :
\begin{equation}
\begin{split}
{e_m} &= q_m^* - q_s^{}\\
{e_s} &= q_s^* - q_m^{}
\end{split}
\label{15}
\end{equation}
The main goal of the controller is that defined errors are converging to zero which result in highly improve transparency of the system. 
\begin{asm}

The environment and operator are assumed to be passive with inputs ${r_m}$,${r_s}$. 
\label{passive}
\end{asm}
\begin{asm}
The environment's and operator's forces are assumed to be bounded.
\end{asm}
\begin{asm}
${r_{md}}$,${r_{sd}}$ are assumed to be zero for $t < 0$
\\
\end{asm}
\begin{theo}
A teleoperation system described as (\ref{5}), (\ref{8}) and (\ref{14}) in presence of data loss described as (\ref{2}) is stable and errors defined as (\ref{15}) are bounded. 
\label{t1}
\end{theo}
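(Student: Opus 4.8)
The natural route is a passivity--Lyapunov argument resting on the ``lossless'' form (\ref{5})--(\ref{7}) of the transformed master and slave plants together with Property~\ref{p1} and the skew-symmetry of $\dot M-2C$. I would take as storage function the kinetic energy in the new coordinates, augmented by the standard wave-line storage whenever a transmission delay $T$ is present,
\begin{equation}
V(t)=\tfrac12 r_m^{T}M_m r_m+\tfrac12 r_s^{T}M_s r_s+\tfrac12\!\int_{t-T}^{t}\!\big(\|u_m\|^2+\|v_s\|^2\big)\,d\tau ,
\label{Vcand}
\end{equation}
which is non-negative by Property~\ref{p1}, with $V\ge\tfrac{m_1}{2}(\|r_m\|^2+\|r_s\|^2)$, so that boundedness of $V$ forces boundedness of $r_m,r_s$. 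Differentiating along (\ref{5}) and cancelling the terms $\tfrac12 r^{T}\dot M r-r^{T}Cr$ by skew-symmetry leaves
\begin{equation}
\dot V=\big(r_m^{T}F_h-r_s^{T}F_e\big)-P_{\mathrm{ch}},
\label{Vdot}
\end{equation}
with $P_{\mathrm{ch}}$ collecting the power delivered by the robots into the communication block (the terms carrying $\bar\tau_m=-F_{md}$ and $\bar\tau_s=F_{sd}$) minus the power leaving it on the other side.

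The heart of the proof is to show that the communication block, \emph{even with the data loss acting}, is still dissipative, $P_{\mathrm{ch}}\ge 0$. For this I would pass to the wave variables (\ref{8}), substitute the loss relations $u_s=L u_m$, $v_m=L v_s$ of (\ref{10}), and use the gain choice $K_m=K_s=b$ of (\ref{14}) --- which is exactly what kills the wave-reflection cross terms --- so that after the bookkeeping $P_{\mathrm{ch}}$ collapses to an expression of the form $\tfrac12\big(1-L^{2}(t)\big)\big(\|u_m\|^2+\|v_s\|^2\big)$, up to the non-negative dissipation of the local feedback loops. Because $L(t)$ in (\ref{2}) is a \emph{finite} trigonometric sum it is bounded, $|L(t)|\le\sum_{n=1}^{N}(|a_n|+|b_n|)$, and under the design requirement that the truncated series stay in the unit interval --- the ``condition for acceptable performance'' anticipated in the introduction, which for $\alpha\ll T$ amounts to choosing $N$ so the Gibbs overshoot is tolerable --- one obtains $1-L^{2}(t)\ge 0$, hence $P_{\mathrm{ch}}\ge 0$. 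I expect \emph{this is the main obstacle}: a multiplicative loss factor is not automatically a passive element, so the whole claim hinges on controlling $\|L\|_\infty$, and making the force/velocity power balance collapse into the clean $1-L^{2}$ form is precisely where the newly proposed continuous-time loss model must earn its place.

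Granting $P_{\mathrm{ch}}\ge 0$, (\ref{Vdot}) gives $\dot V\le r_m^{T}F_h-r_s^{T}F_e$; integrating from $0$ and invoking the passivity Assumption~\ref{passive} together with the bounded-force assumption, the right-hand side is bounded above by a constant $c$ independent of $t$, so $0\le V(t)\le V(0)+c$ and therefore $r_m,r_s\in\mathcal L_\infty$. Finally, since $r_m=\dot q_m+\lambda q_m$ means $q_m$ is the output of the stable filter $(sI+\lambda)^{-1}$ driven by the bounded $r_m$, both $q_m$ and $\dot q_m=r_m-\lambda q_m$ are bounded, and likewise $q_s,\dot q_s$; then the loss-affected signals $q_m^{*},q_s^{*}$ (a bounded signal times the bounded $L$) are bounded, so the tracking errors $e_m=q_m^{*}-q_s$ and $e_s=q_s^{*}-q_m$ of (\ref{15}) are bounded, which is the assertion. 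The routine pieces I would not spell out here are the explicit port-power computation in wave variables and the evaluation of the operator/environment energy constant.
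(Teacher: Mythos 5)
Your proposal follows essentially the same route as the paper: a passivity/Lyapunov argument with kinetic energy in the $(r_m,r_s)$ coordinates, cancellation via skew-symmetry of $\dot M-2C$, and the decisive step of showing the lossy channel is dissipative by rewriting its power balance in wave variables as $\tfrac12\,(1-L^2)\,(u_m^2+v_s^2)\ge 0$ --- the paper does exactly this, differing only in bookkeeping (it places the channel and operator/environment storage inside $V$ as integrals shown to be non-negative, and adds quadratic error terms with $K_1=K_2=\lambda b/2$ so that boundedness of $e_m,e_s$ follows from $V$ directly rather than through the stable filter $(s+\lambda)^{-1}$ as in your closing step). Your explicit caveat that $\|L\|_\infty\le 1$ must be enforced for the truncated Fourier series (Gibbs overshoot) is a point the paper asserts without justification when it claims $u_m^2-(u_m^*)^2>0$ ``is clear'' from the pulse model, so it is a welcome sharpening rather than a deviation.
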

\begin{proof}
Let us define a Lyapunov function candidate as :
\begin{equation}
\begin{split}
V =& \frac{1}{2}(r_m^TM_m^{}{r_m} + r_s^TM_s^{}{r_s} + e_m^TK_1^{}{e_m} + e_s^TK_1^{}{e_s}) + \\
&\smallint _0^t(F_e^T{r_s} - F_h^T{r_m})dt + \smallint _0^t(F_{md}^T{r_{md}} - F_{sd}^T{r_{sd}})dt
\end{split}
\label{16}
\end{equation}
where V(0)=0 and the first four terms are positive according to positive definiteness of  ${M_s},{M_m},{K_1},{K_2}$ and property \ref{passive}. Hence  :
$\smallint _0^t(F_e^T{r_s})dt \ge 0\,\,$and$ - \smallint _0^t(F_h^T{r_m})dt \ge 0$
Hence just  should be shown that :
\begin{equation*}
 {V_1} = \smallint _0^t(F_{md}^T{r_{md}} - F_{sd}^T{r_{sd}})dt\,\, > 0
\label{17}
\end{equation*}
From (\ref{8}) can be shown : 
\begin{equation}
{F_{md}} = \sqrt {\frac{b}{2}} ({u_m} + {v_m})\,\,\,\,,\,\,\,{F_{sd}} = \sqrt {\frac{b}{2}} ({u_s} + {v_s})
\label{18}
\end{equation}
\begin{equation}
\begin{array}{l}
{r_{md}} = \frac{1}{{\sqrt {2b} }}({u_m} - {v_m})\,\,,\,\,\,\,
{r_{sd}} = \frac{1}{{\sqrt {2b} }}({u_s} - {v_s})\,
\end{array}
\label{19}
\end{equation}
Using (\ref{18}) and (\ref{19}) :
\begin{equation}
\begin{array}{l}
\begin{array}{l}
V_1\,\,\, =\int\limits_0^t {\frac{1}{2}(u_m^2 - v_m^2)}  - \int\limits_0^t {\frac{1}{2}(u_s^2 - v_s^2)dt} 
\end{array}
\end{array}
\label{20}
\end{equation}
And substituting (\ref{10}) and (\ref{20}) :
\begin{equation}
\begin{split}
\,\,{V_1} &=\int\limits_0^t {\frac{1}{2}(u_m^2 - v{{_s^*}^2})dt}  - \frac{1}{2}\int\limits_0^t {(u{{_m^*}^2} - v_s^2)} dt\\
& = \int\limits_0^t {\frac{1}{2}(u_m^2 - u{{_m^*}^2})dt}  + \frac{1}{2}\int\limits_0^t {(v_s^2 - v{{_s^*}^2})} dt
\end{split}
\end{equation}
 According to data loss model in Fig. \ref{trainpulse} it is clear that $\,u_m^2 - {(u_m^*)^2} > 0\,$ , $v_s^2 - {(v_s^*)^2} > 0$ and therefore the candidate Lyapunov function is positive definite.The time derivative of $V$ is :\\
  \begin{equation*}
\begin{split}
 \dot V =& r_m^TM_m^{}{{\dot r}_m} + \frac{1}{2}r_m^T\dot M_m^{}{r_m} + r_s^TM_s^{}{{\dot r}_s}\\
&+ \frac{1}{2}r_s^T\dot M_s^{}{r_s} + \dot e_m^TK_1^{}{e_m} + \dot e_s^TK_2^{}{e_s}\\
 &+ F_e^T{r_s} - F_h^T{r_m} + F_{md}^T{r_{md}} - F_{sd}^T{r_{sd}}\,\\
 =& r_m^T( - {C_m}{r_m} + {F_h} - {F_{md}})\, + \frac{1}{2}r_m^T\dot M_m^{}{r_m}\\
&+ r_s^T( - {C_s}{r_s} - {F_e} + {F_{sd}}) + \frac{1}{2}r_s^T\dot M_s^{}{r_s}\, +\, \,\dot e_m^TK_1^{}{e_m}\\
&+ \dot e_s^TK_2^{}{e_s} + F_e^T{r_s} - F_h^T{r_m} + F_{md}^T{r_{md}} - F_{sd}^T{r_{sd}}\,\\
=& r_m^T({F_h} - {F_{md}})\, + \,r_s^T( - {F_e} + {F_{sd}}) + \,\,\dot e_m^TK_1^{}{e_m}\\
&+ \dot e_s^TK_2^{}{e_s} + F_e^T{r_s} - F_h^T{r_m} + F_{md}^T{r_{md}} - F_{sd}^T{r_{sd}}\,\\
 =&  - {({r_{md}} - {r_m})^T}b({r_{md}} - {r_m}) - {({r_s} - {r_{sd}})^T}b({r_s} - {r_{sd}})\\
&+ \,\dot e_m^TK_1^{}{e_m} + \dot e_s^TK_2^{}{e_s}\,\,\,\,\,\,\,\,\,\,\,\,\,\,\,\,\,\,\,\,\,\,\,\,\,\,\,\,\,\,
\end{split}
 \label{21}
 \end{equation*}
 Using (\ref{6}),(\ref{14}) and (\ref{15}) :
 \begin{equation}
 \begin{split}
r_s^* - r_m^{} &= \dot q_s^* + \lambda q_s^* - {{\dot q}_m} - \lambda {q_m} = {{\dot e}_s} + \lambda {e_s}\\
r_s^{} - r_m^* &= \dot q_s^{} + \lambda q_s^{} - {{\dot q}^*}_m - \lambda {q^*}_m = {{\dot e}_m} + \lambda {e_m}
\end{split}
 \label{22}
 \end{equation}
 \begin{equation*}
\begin{split}
\dot V =&  - \frac{1}{4}(r_s^*\, - {r_m})b(r_s^* - {r_m}) - \frac{1}{4}(r_s^{}\, - {r^*}_m)b(r_s^{} - {r^*}_m)\,\\
+& \dot e_m^TK_1^{}{e_m} + \,\,\dot e_s^TK_2^{}{e_s}\,\\
 =&  - \frac{1}{4}{({{\dot e}_s} + \lambda {e_s})^T}b({{\dot e}_s} + \lambda {e_s})\\
 -& \frac{1}{4}{({{\dot e}_m} + \lambda {e_m})^T}b({{\dot e}_m} + \lambda {e_m}) + \dot e_m^TK_1^{}{e_m} + \,\,\dot e_s^TK_2^{}{e_s}\,\\
 = & - \frac{1}{4}(\dot e_s^Tb{{\dot e}_s} + 2\dot e_s^Tb\lambda {e_s} + e_s^T\lambda b\lambda {e_s} + \dot e_m^Tb{{\dot e}_m}\\
 +& 2\dot e_m^Tb\lambda {e_m} + e_m^T\lambda b\lambda {e_m}) + \dot e_m^T{K_1}{e_m} + \dot e_s^T{K_2}{e_s}
\end{split}
\label{23}
 \end{equation*}
 Since $b$ and $\lambda$ are positive definite diagonal matrices.
 
 Hence ${(\dot e_i^Tb\lambda {e_i})^T} = \dot e_i^Tb\lambda {e_i}\,i = m,s$
and  \\
\begin{equation*}
 \begin{split}
\dot V &=  - \frac{1}{4}(\dot e_s^Tb{{\dot e}_s} + 2\dot e_s^Tb\lambda {e_s} + e_s^T\lambda b\lambda {e_s}) + \dot e_s^T{K_2}{e_s}\\
& - \frac{1}{4}(\dot e_m^Tb{{\dot e}_m} + 2\dot e_m^Tb\lambda {e_m} + e_m^T\lambda b\lambda {e_m}) + \dot e_m^T{K_1}{e_m}
\end{split}
\end{equation*}

 by choosing  ${K_1} = {K_2} = \frac{{\lambda b}}{2}$ , extra terms are omitted and $\dot V $ is simplified  to
\begin{equation}
\dot V =  - \frac{1}{4}({\dot e_s}^Tb{\dot e_s} + e_s^T\lambda b\lambda {e_s} + {\dot e_m}^Tb{\dot e_m} + e_m^T\lambda b\lambda {e_m}) \le 0
\label{24}
\end{equation}
As V is positive definite  and  $\dot V $ is negative semi definite so the system is stable and the errors are bounded.
Using property\ref{p1}, ${r_m}$ and ${r_s}$ remain bounded.
\begin{equation}
\begin{split}
{r_m}& = {{\dot q}_m} + \lambda {q_m} \Rightarrow s{q_m} + \lambda {q_m} = {r_m} \Rightarrow \,{q_m} = \frac{{{r_m}}}{{s + \lambda }}\\
{r_s} &= {{\dot q}_s} + \lambda {q_s} \Rightarrow \,s{q_s} + \lambda {q_s} = {r_s} \Rightarrow {q_s} = \frac{{{r_s}}}{{s + \lambda }}
\end{split}
\label{26}
\end{equation}
According to (\ref{26}) and boundedness of ${r_m}$ and ${r_s}$, boundedness of ${q_m},{q_s}$,${\dot q_m},{\dot q_s}$ are  concluded. Hence 
\begin{equation}
{q_m}, {q_s}, {r_m}, {r_s}, {\dot q_m},{\dot q_s}\,\mbox{  is bounded.} \,\, 
\label{27}
\end{equation}
\begin{theo}
Teleoperation system described with (\ref{3}),(\ref{4}),(\ref{5}),(\ref{6}),(\ref{8}),(\ref{9}) and (\ref{14}) which is illustrating in Fig.\ref{new},  ${\ddot q_m},{\ddot q_s}$ are bounded if environment and operator are passive.

\label{t2}
\end{theo}
\begin{proof}
From boundedness of ${r_m},{r_s}\,\,\,is\,\,\,bounded$ we can conclude the boundedness of ${r_{sd}},{r_{md}}$ based on (\ref{14}).
Using (\ref{9}) ${F_{md}}$ and ${F_{sd}}$ remain bounded. Hence ${M_m},{M_s},{\dot q_m}$ and ${\dot q_s}$ are bounded.
From (\ref{3}) 
\begin{equation}
{\ddot q_m}\mbox{ and } {\ddot q_s}\mbox{ are bounded}
\label{28}
\end{equation}
\end{proof}
Since the system is non-autonomos, the Barballat's lemma \cite{17} is used . 
\begin{lem}
If V satisfies following three conditions:
\begin{itemize}
\item $V$ is lower bounded.
\item $\dot V \le 0$
\item $\dot V$ is uniformly continuous
\end{itemize}
Then \[\dot V \to 0\,\,as\,\,t \to \infty \]
\end{lem}
At this stage uniformly continuity of $\dot V$  are proven. Using \cite{17} $\dot V$ is uniformly continuous if $\ddot V$ is always bounded. Hence it should be shown that ${\ddot e_s},{\ddot e_m},{\dot e_m},{\dot e_s}$ in what conditions remain bounded. 

Since ${e_m},{e_s}$ are bounded , Using (\ref{2}) and Fig. \ref{trainpulse} then lost signals are
$q_m^* = L(t) {q_m}\,,\,\,\,q_s^* = L(t) {q_s}$\\
Using the boundedness of the first and third term of (\ref{31}) it should be proven  that $\dot L(t) \times {q_m}$ is bounded.
 \begin{equation}
\begin{split}
{e_m} &= q_m^* - {q_s} = L(t) {q_m} - {q_s}\,\\
 \to \,{{\dot e}_m}& = \,{{\dot q}_m}L(t) + \,{q_m} \dot L(t)\, + \,{{\dot q}_s}
\end{split}
 \label{31}
 \end{equation}
   As long as ${q_m}$ is bounded, the boundedness of $\dot L(t)$ should be proven. Hence the train pulse are written  by Fourier series with finite terms and derivative of that  it with respect to time is as follows :
   \begin{equation*}
  \begin{split}
L &= \sum\limits_{n = 1}^N {\frac{{ - 1}}{{\pi n}}\sin (\frac{{2\pi n\alpha }}{T})\cos (n{w_0}t)} \\
\,\,\,\,\,\,\,\,\,\,\, &+ \frac{1}{{\pi n}}[\cos (\frac{{2\pi n\alpha }}{T})\, - {( - 1)^n}]\sin (n{w_0}t)\\
\,\,\,\,\dot L\,& = \sum\limits_{n = 1}^N {\frac{2}{T}\sin (\frac{{2\pi n\alpha }}{T})\sin (\frac{{2n\pi }}{T}t)} \\
\,\,\,\,\,\,\,\,\, &+ \frac{2}{T}[\cos (\frac{{2\pi n\alpha }}{T})\, - {( - 1)^n}]\cos (\frac{{2n\pi }}{T}t)
\end{split}
      \end{equation*}
The boundedness of $\dot L$ should be investigated at the point $t = \alpha $.\\
\begin{equation*}
\begin{split}
{{\dot L}_{t = \alpha }}\, &= \sum\limits_{n = 1}^N {\frac{2}{T}\sin (\frac{{2\pi n\alpha }}{T})\sin (\frac{{2n\pi }}{T}\alpha )} \\
\,\,\,\,\,\,\,\,\,\,\, &+ \frac{2}{T}[\cos (\frac{{2\pi n\alpha }}{T})\, - {( - 1)^n}]\cos (\frac{{2n\pi }}{T}\alpha )\\
\,\,\,\,\,\,\,\,\,\,\,\, &= \,\,\frac{2}{T}\,\sum\limits_{n = 1}^N 1 \, - {( - 1)^n}\cos (\frac{{2n\pi }}{T}\alpha )
\end{split}
\end{equation*}

Since Fourier series is considered to be written with finite terms
\begin{equation}
{\dot e_m}, {\dot e_s}\,\,is\,\,bounded
\label{32}
\end{equation}

In order to investigate the boundedness of ${\ddot e_m},{\ddot e_s}$  thus \\
\begin{equation*}
\begin{split}
{e_m} &= q_m^* - {q_s} = L(t){q_m} - {q_s}\\
{{\dot e}_m} &= \dot q_m^* - {{\dot q}_s} = \dot L(t){q_m} + L(t){{\dot q}_m} - {{\dot q}_s}\\
{{\ddot e}_m} &= \ddot q_m^* - {{\ddot q}_s} = \ddot L(t){q_m} + 2\dot L(t){{\dot q}_m} + L(t){{\ddot q}_m} - {{\ddot q}_s}
\end{split}
\end{equation*}
With using the boundedness of ${q_m},L(t),\dot L(t),{\ddot q_s},{\ddot q_m}$ and ${{\dot q}_m}$ the boundedness of $\ddot L(t)$ should be shown hence :
\begin{equation*}
\begin{split}
L =& \sum\limits_{n = 1}^N {\frac{{ - 1}}{{\pi n}}\sin (\frac{{2\pi n\alpha }}{T})\cos (\frac{{2n\pi }}{T}t)} \\
\,\,\,\,\,\,\,\,\,\,\,\,\, +& \frac{1}{{\pi n}}[\cos (\frac{{2\pi n\alpha }}{T})\, - {( - 1)^n}]\sin (\frac{{2n\pi }}{T}t)\\
\,\,\,\,\,\,\dot L =& \sum\limits_{n = 1}^N {\frac{2}{T}\sin (\frac{{2\pi n\alpha }}{T})\sin (\frac{{2n\pi }}{T}t)} \\
\,\,\,\,\,\,\,\,\,\,\,\,\, +& \frac{2}{T}[\cos (\frac{{2\pi n\alpha }}{T})\, - {( - 1)^n}]\cos (\frac{{2n\pi }}{T}t)\\
\,\,\,\,\,\,\ddot L=& \sum\limits_{n = 1}^N {\frac{2}{T}\frac{{2n\pi }}{T}\sin (\frac{{2\pi n\alpha }}{T})\cos (\frac{{2n\pi }}{T}t)} \\
\,\,\,\,\,\,\,\,\,\,\,\, -& \frac{2}{T}\frac{{2n\pi }}{T}[\cos (\frac{{2\pi n\alpha }}{T})\, - {( - 1)^n}]\sin (\frac{{2n\pi }}{T}t)
\end{split}
\end{equation*}
According to variation of $L(t)$ at $t = \alpha $ it is sufficient to investigate the boundedness neighbourhood of this point.
\begin{equation}
\begin{split}
\ddot L{\,_{t = \alpha }}\, =& \sum\limits_{n = 1}^N {\frac{2}{T}\frac{{2n\pi }}{T}\sin (\frac{{2\pi n\alpha }}{T})\cos (\frac{{2n\pi }}{T}\alpha )} \\
 \,\,\,\,\,\,\,\,\,\,\,\,\,\,\,\,\,\,\,\, -& \frac{2}{T}\frac{{2n\pi }}{T}[\cos (\frac{{2\pi n\alpha }}{T})\, - {( - 1)^n}]\sin (\frac{{2n\pi }}{T}\alpha )\\
    \,\,\,\,\,\,\,\,\, \, \,\,\,\,\,\,=& \sum\limits_{n = 1}^N { - \frac{2}{T}\frac{{2n\pi }}{T}[ - {{( - 1)}^n}]\sin (\frac{{2n\pi }}{T}\alpha )} \\
    \,\,\,\,\,\,\,\,\, \,\,\,\,\,\, =& \frac{{4\pi }}{{{T^2}}}\sum\limits_{n = 1}^N {n[{{( - 1)}^n}]\sin (\frac{{2n\pi }}{T}\alpha )} \,\,\, \to \,\,\,\,\ddot e\,\,\mbox{is bounded .}
\end{split}
\label{33}
\end{equation} 
results in converging of this series to zero as $\alpha$ converges to zero. Hence, $\ddot L$ remains bounded.
Using (\ref{32}) and (\ref{33}), $\ddot V$ remains bounded. Hence $\dot V$ is uniformly continuous, consequently  \[\dot V \to 0\,\,as\,\,t \to \infty \] 
\end{proof}
\section{SIMULATION RESULTS}

In this section we simulate our proposed teleoperation system on a single degree of freedom system with following dynamics: 
\begin{equation*}
\begin{split}
{M_m}{{\ddot q}_m} &= {F_h} + {\tau _m}\\
{M_s}{{\ddot q}_s} &= {\tau _s} - {F_e}\,
\end{split}
\label{34}
\end{equation*}
With using (\ref{3}) the dynamics can be simplified as follow
\begin{equation*}
\begin{split}
{M_m}{{\dot r}_m} &= {F_h} - {F_{md}}\\
{M_s}{{\dot r}_s} &= {F_{sd}} - {F_e}
\end{split}
\label{35}
\end{equation*}
The environment is set to be mass, spring and damper. The operator moves the master by inserting force$ {F_h}$.
\begin{equation*}
\begin{array}{l}
{M_m} = {M_s} = 1\,,\,\,\,b = 1.2\,\,
\end{array}
\label{36}
\end{equation*}
In the first step we investigate the stability of the system. In order to investigate the dissipation of the channel, we apply a pulse signal to the master in a limited time as depicted in Fig. \ref{sim1}. All signals converge to zero hence the channel is passive and stable See Fig. \ref{sim2}.
In second step in order to investigate the performance of the system in the presence of different initial conditions we consider channel with no data loss ($\frac{\alpha }{T} = 0$) and there is just initial condition between master and slave, see Fig. \ref{sim3}. the performance remains acceptable in presence of  different initial conditions between master and slave position. Finally to investigate the performance of the system in the presence of data loss in communication media. Wave variable signals were lost in channel with period T=10(s) see Fig\ref{sim6} and  Fig\ref{sim7}. 
We assume loss rate $\frac{\alpha }{T} = 0.02,0.05$  and different initial conditions between master and slave see Fig.\ref{sim4} and  Fig.\ref{sim5}. The resulting performance is acceptable. Obviously, increasing loss rate results in poor transparency although the proposed teleoperation system remains stable with tolerable performance (even in in the presence of data loss and different initial conditions).

\begin{figure}
\centering
\includegraphics[scale=0.30]{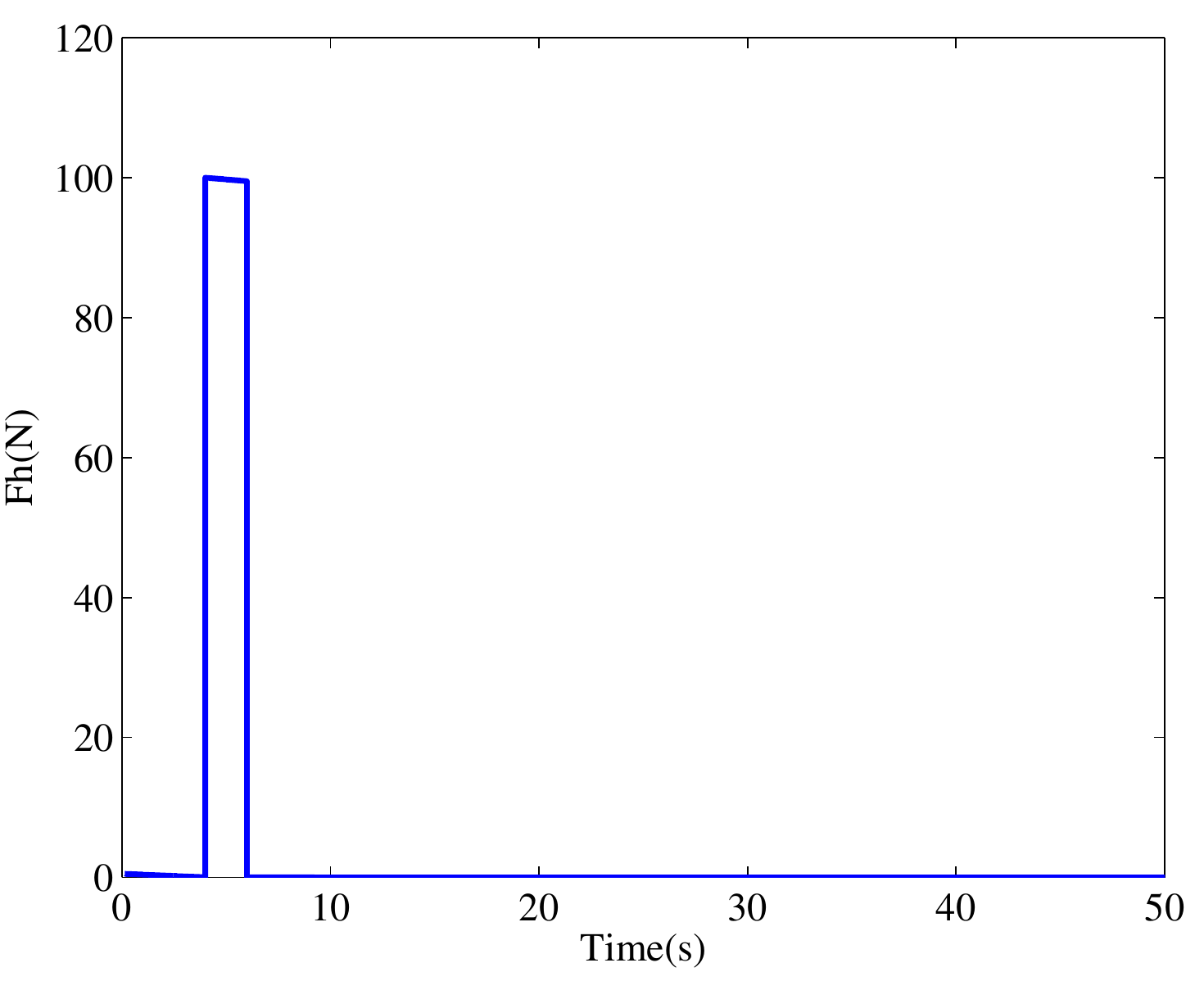}
\caption{apply a pulse to the master in a limited time }
\label{sim1}
\end{figure}

\begin{figure}
\centering
\includegraphics[scale=0.30]{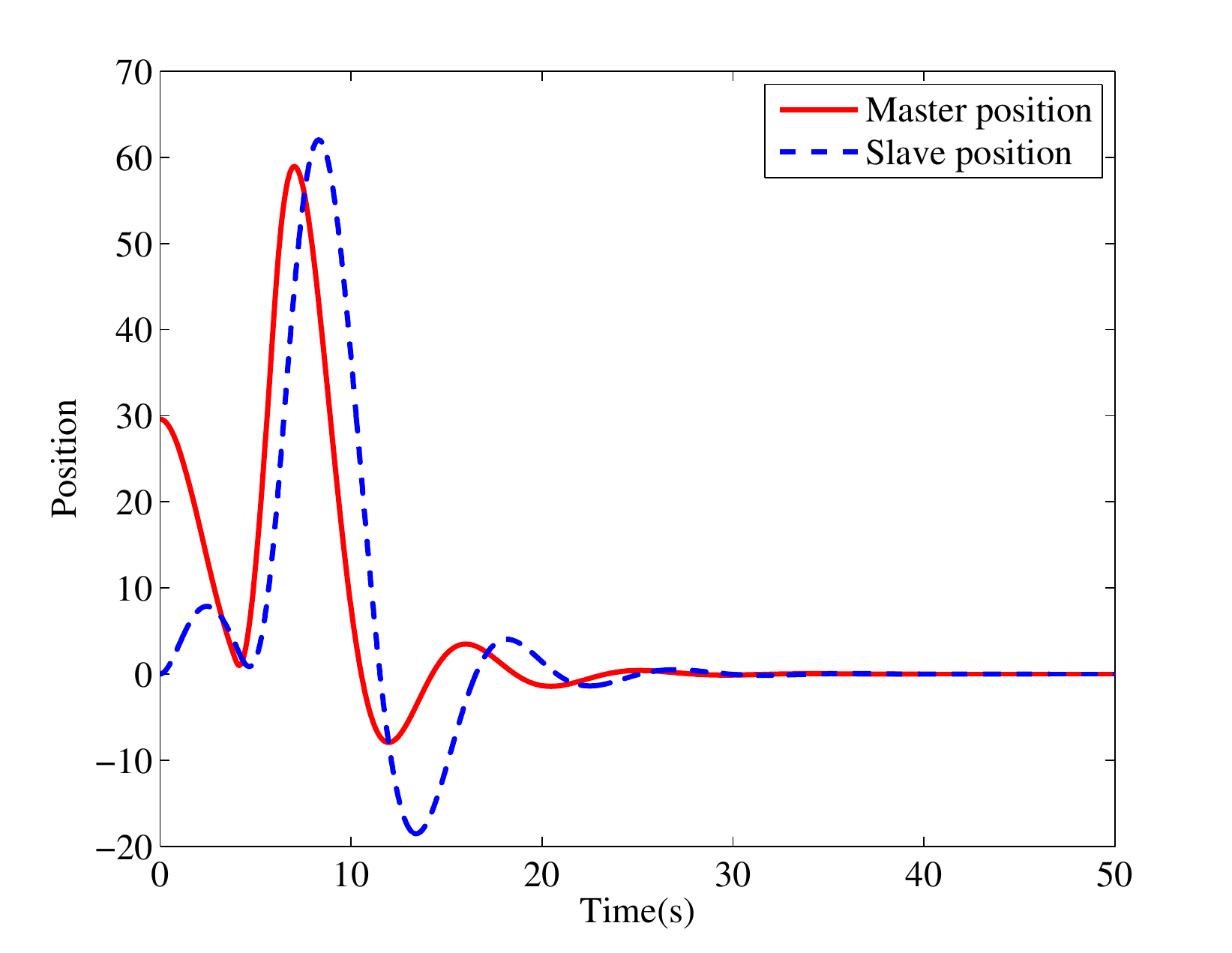}
\caption{position Tracking and damping all signal after applying a pulse in a limited time }
\label{sim2}
\end{figure}

\begin{figure}
\centering
\includegraphics[scale=0.30]{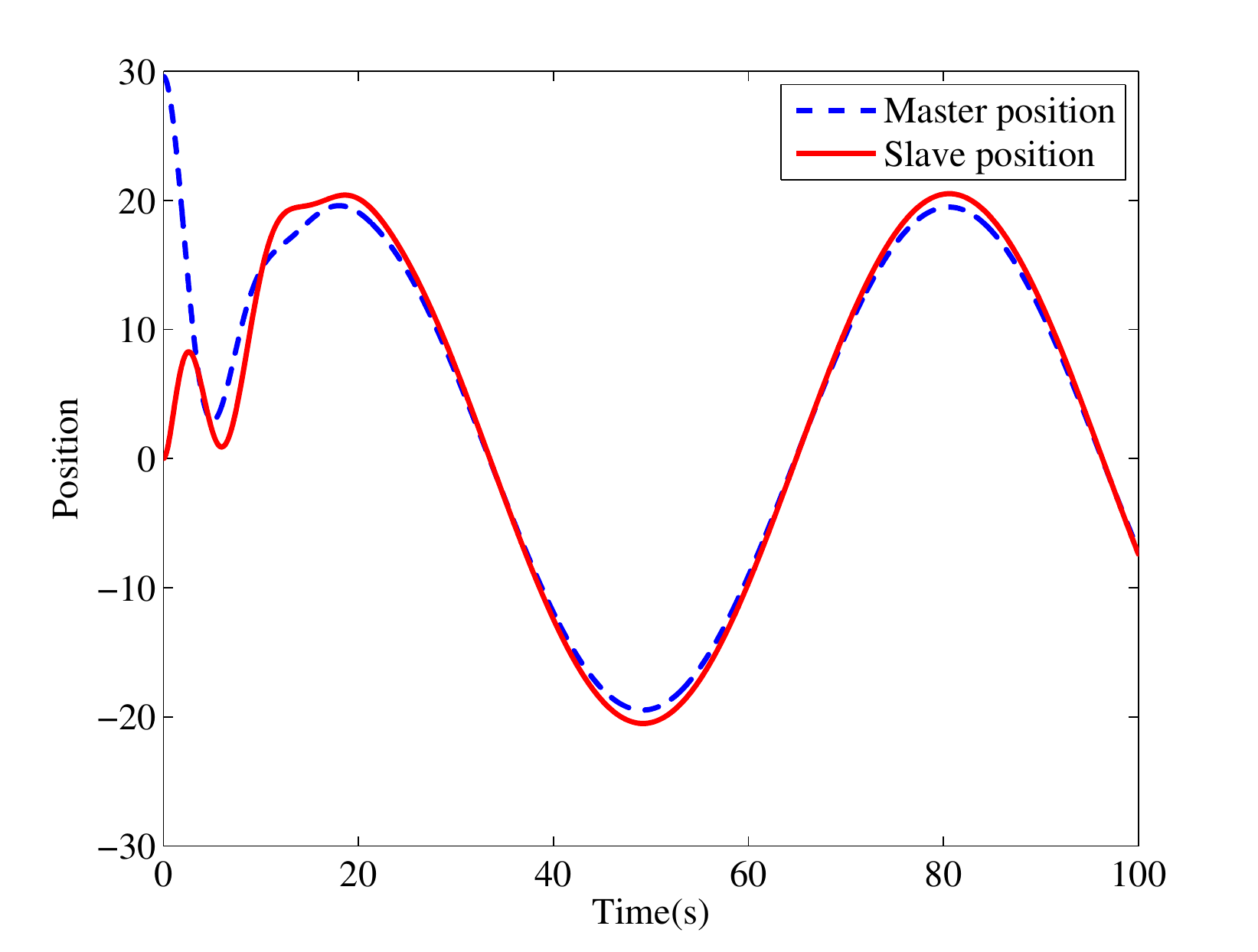}
\caption{Position tracking with different  initial conditions and $\frac{\alpha}{T}=0$}
\label{sim3}
\end{figure}

\begin{figure}
\centering
\includegraphics[scale=0.31]{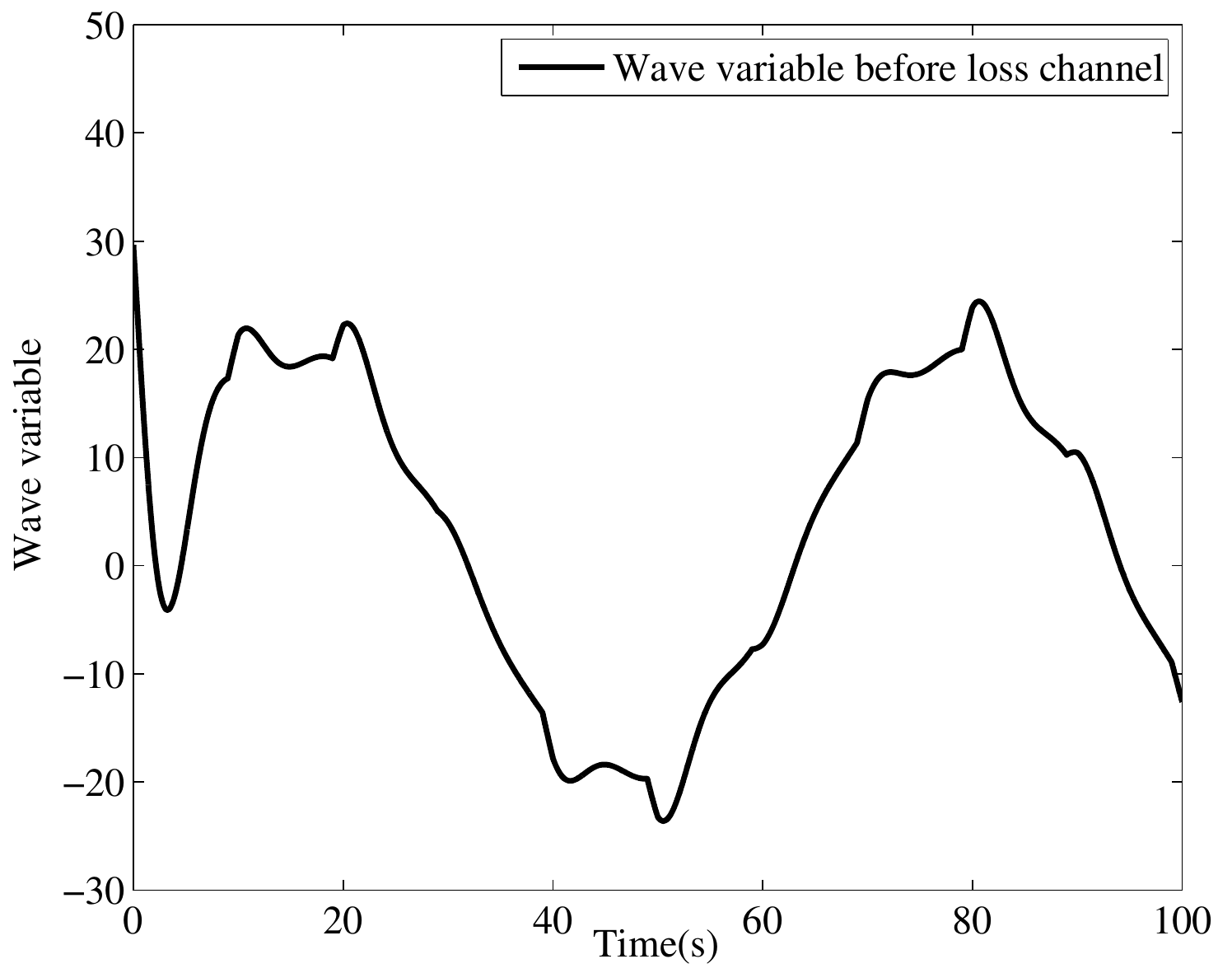}
\caption{wave variable signal before data loss channel}
\label{sim6}
\end{figure}
\begin{figure}
\centering
\includegraphics[scale=0.31]{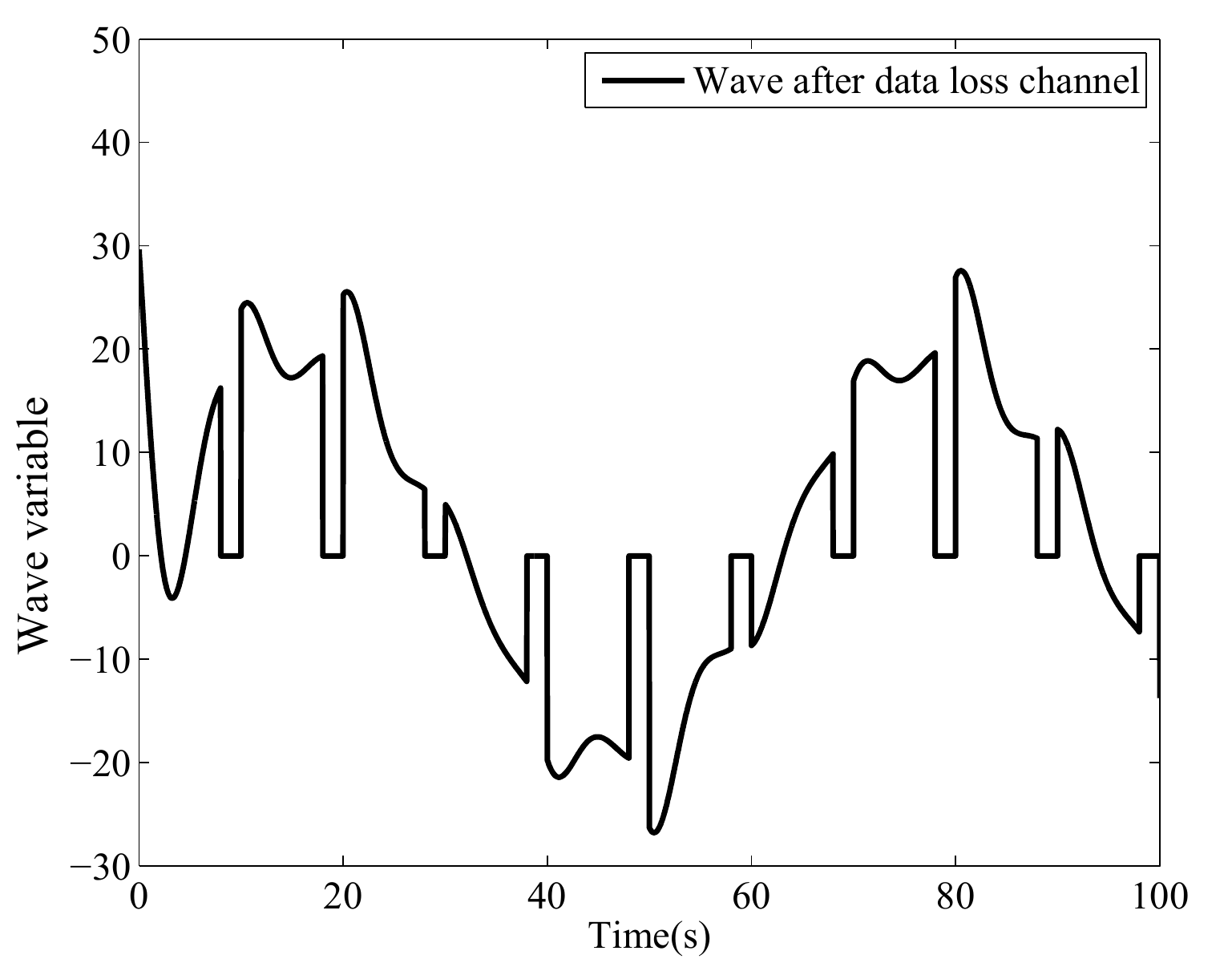}
\caption{wave variable signal after data loss channel}
\label{sim7}
\end{figure}
\begin{figure}
\centering
\includegraphics[scale=0.30]{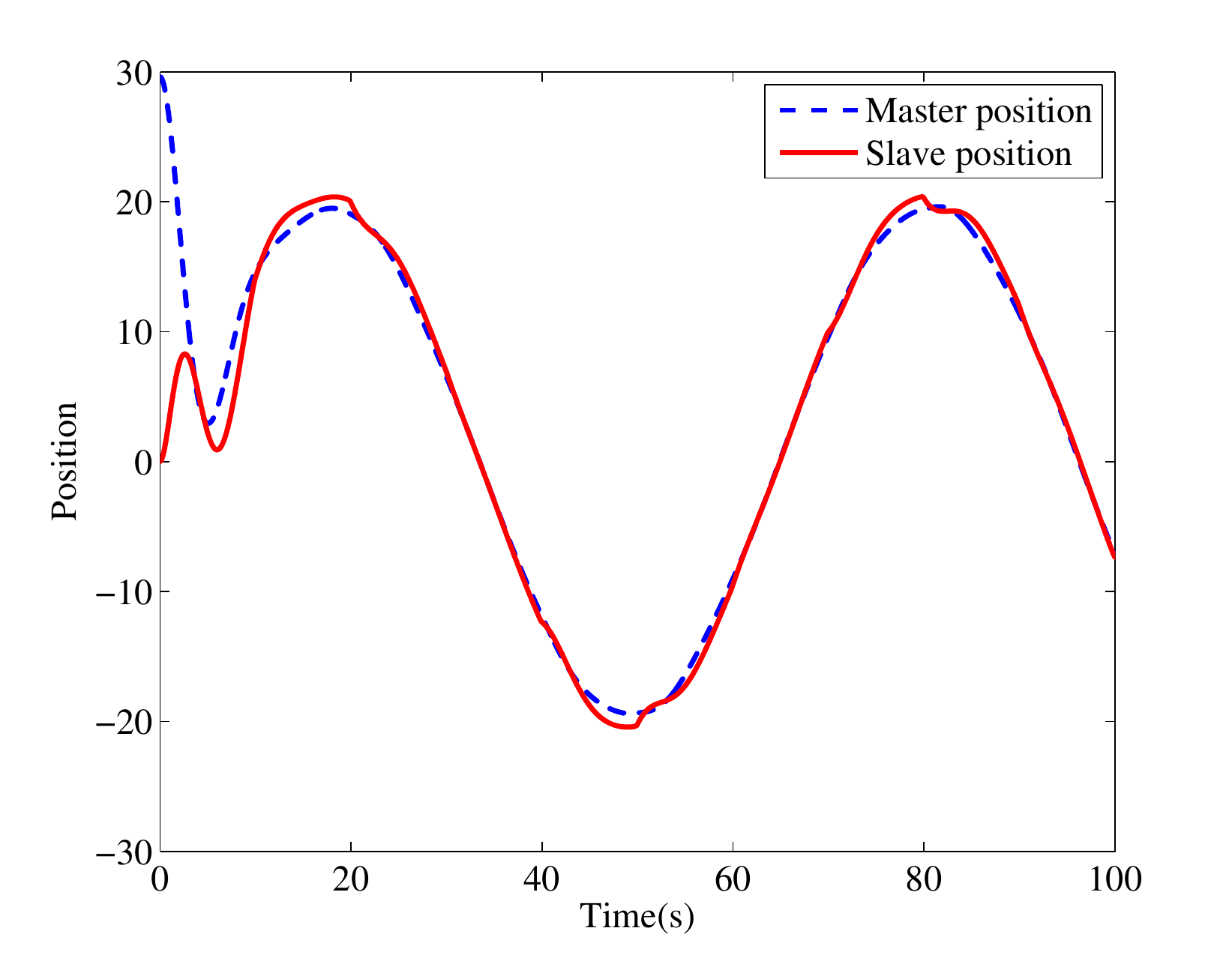}
\caption{Position tracking with different  initial conditions and $\frac{\alpha}{T}=0.02$}
\label{sim4}
\end{figure}
\begin{figure}
\centering
\includegraphics[scale=0.30]{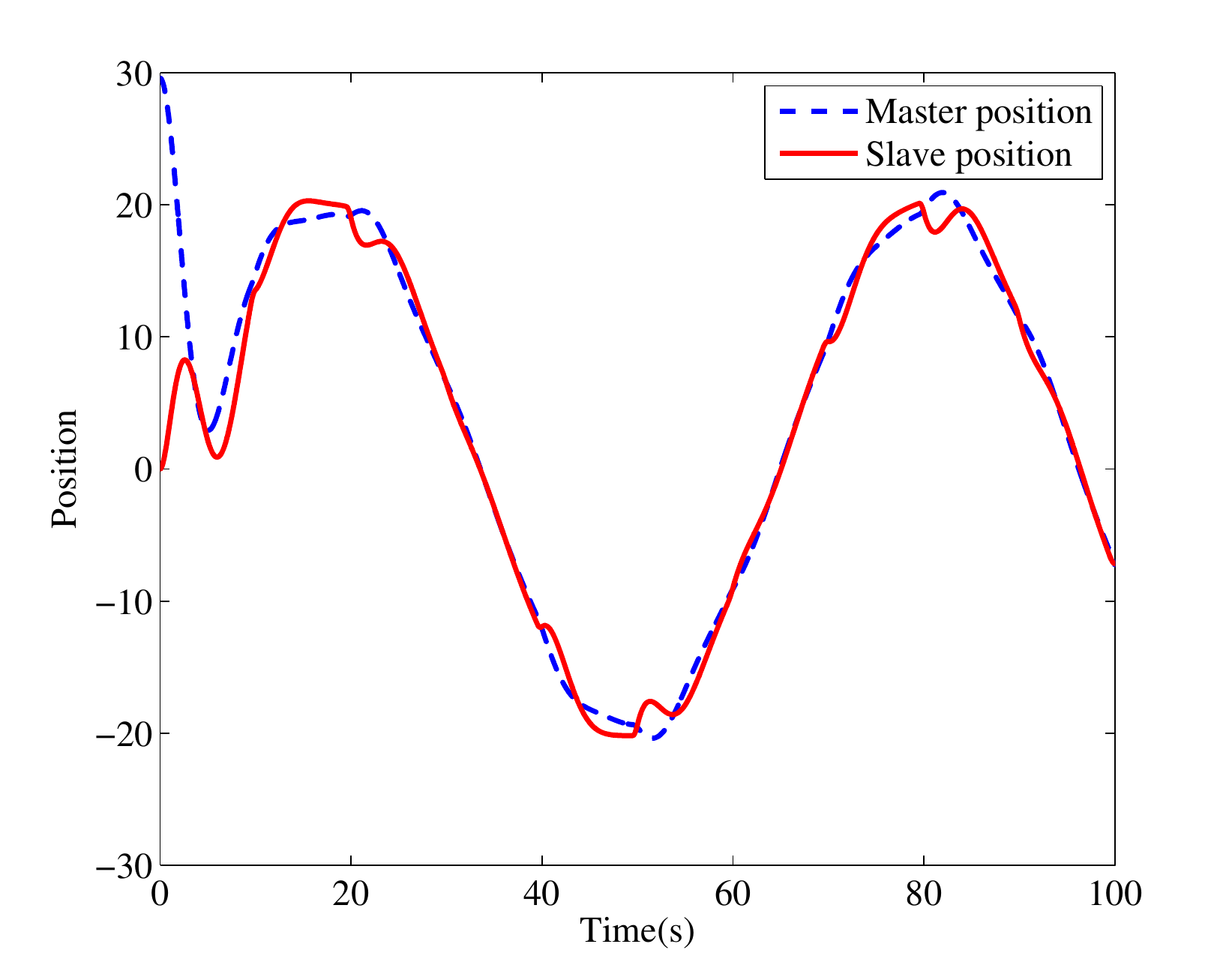}
\caption{Position tracking with different  initial conditions and $\frac{\alpha}{T}=0.05$}
\label{sim5}
\end{figure}

\section{CONCLUSIONS}
In this paper we investigated the passivity based traditional architecture  to cover position tracking in the presence of  data loss and  offset of initial conditions and  proposed a new data loss model as a set of  periodic continues pulses  and  use  a coordination architecture which uses state feedback to define a passive output for a  teleoperation system. This feedback is containing both position and velocity information and simulation results verified the usefulness  of mentioned  architecture. 
Next approach would be investigating  this architecture for a general model of data loss as a stochastic phenomenon and also improving force tracking in the presence of data loss.

\end{document}